\newtheorem{theorem}{Theorem}[section]
\newtheorem{lemma}[theorem]{Lemma}
\newtheorem{proof}[theorem]{Proof}
\begin{document}

\title{A Blockwise Descent Algorithm for Group-penalized Multiresponse and Multinomial Regression}

\author{Noah Simon \and Jerome Friedman \and Trevor Hastie}

\maketitle

\begin{abstract}
In this paper we purpose a blockwise descent algorithm for group-penalized multiresponse regression.  Using a quasi-newton framework we extend this to group-penalized multinomial regression. We give a publicly available implementation for these in \texttt{R}, and compare the speed of this algorithm to a competing algorithm ---  we show that our implementation is an order of magnitude faster than its competitor, and can solve gene-expression-sized problems in real time.
\end{abstract}

\section{Introduction}\label{sec:intro}
Consider the usual linear regressions framework with $y$ an $n$-vector of responses and $X$, an $n$ by $p$ matrix of covariates. Traditionally, problems involve $n<p$ and it is standard to estimate a regression line with least squares. In many recent applications (genomics and advertising, among others) we have $p\gg n$, and standard regression fails. In these cases, one is often interested in a solution involving only few covariates. Toward this end, \citet{tibs1996} proposed the Lasso: to find our regression coefficients by solving the regularized least squares problem
\[
\hat{\beta} = \operatorname{argmin}_{\beta} \frac{1}{2}\left\|y - X\beta\right\|_2^2 + \lambda\left\|\beta\right\|_1
\]
This method regularizes $\beta$ by trading off ``goodness of fit'' for a reduction in ``wildness of coefficients'' --- it also has the effect of giving a solution with few nonzero entries in $\beta$. This was generalized by \citet{YL2007} to deal with grouped covariates; they propose to solve
\begin{equation}\label{eq:glasso}
\hat{\beta} = \operatorname{argmin}_{\beta} \frac{1}{2}\left\|y - X\beta\right\|_2^2 + \lambda\sum_k\left\|\beta_{I(k)}\right\|_2
\end{equation}
where the covariates are partitioned into disjoint groups and $I(k)$ denotes the indices of the $k$th group of covariates (i.e., $\beta_{I(k)}$ indicates the sub-vectors of $\beta$ corresponding to group $k$). This approach gives a solution with few non-zero groups.\\

Now, instead of usual linear regression, one might be interested in multiresponse regression --- instead of an $n$-vector, $Y$ is an $n\times M$ matrix, and $\beta$ is a $p\times M$ matrix. In some cases one might believe that our response variables are related, in particular that they have roughly the same set of important explanatory variables, a subset of all predictor variables measured (i.e., in each row of $\beta$ either all of the elements are zero or all are
non-zero.). A number of authors have similar suggestions for this problem (\citealp{obozinski2007}, \citealp{argyriou2007}, among others) which build on the group-lasso idea of \citet{YL2007}; to use a group-penalty on rows of $\beta$:
\begin{equation}\label{eqn:mrlasso}
\hat{\beta} = \operatorname{argmin}_{\beta} \frac{1}{2}\left\|Y - X\beta\right\|_F^2 + \lambda\sum_{k=1}^p\left\|\beta_{k \cdot}\right\|_2
\end{equation}
where $\beta_{k \cdot}$ refers to the $k$th row of $\beta$ (likewise, we will, in the future, use $\beta_{\cdot m}$ to denote the $m$th column of $\beta$). For the remainder of this paper we will refer to Equation~\ref{eqn:mrlasso} as the ``multiresponse lasso''.

Multinomial regression (via a generalized linear model) is also initimately related to multiresponse regression. In particular, common methods for finding the MLE in non-penalized multinomial regression (eg Newton-Raphson) reduce the problem to that of solving a series of weighted multiresponse regression problems. In multinomial regression, $\beta$ is again an $n\times M$ matrix, where $M$ is now the number of classes --- the $(k,m)$ entry gives the contribution of variable $x_k$ to class $m$. More specifically, the probability of an observation with covariate vector $x$ belonging to a class $l$ is parametrized as
\[
\operatorname{P}\left(y = l \middle| x\right) = \frac{\exp\left(x^{\top}\beta_{\cdot l}\right)}{\sum_{m\leq M}\exp\left(x^{\top}\beta_{\cdot m}\right)}.
\]
 As in linear regression, one might want to fit a penalized version of this model. Standard practice has been to fit
\[
\hat{\beta} = \operatorname{argmin}_{\beta}-\ell\left(y,X\beta\right) + \lambda\left\|\beta\right\|_1.
\]
where $\ell$ is the multinomial log-likelihood. While this does give sparsity in $\beta$ it may give a very different set of non-zero coefficients in each class. We instead discuss using
\begin{equation}\label{eq:groupedMultinom}
\hat{\beta} = \operatorname{argmin}_{\beta}-\ell\left(y,X\beta\right) + \lambda\sum_{k=1}^p\left\|\beta_{k\cdot}\right\|_2.
\end{equation}
 which we will refer to as the group-penalized multinomial lasso. Because each term in the penalty sum is non-differentiable only when all elements of the vector $\beta_{k\cdot}$ are $0$, this formulation has the advantage of giving the same nonzero coefficients in each class. It was recently proposed by \citet{vincent2012}. When the true underlying model has the same (or similar) non-zero coefficient structure across classes, this model can improve prediction accuracy. Furthermore, this approach can lead to more interpretable model estimates.\\

One downside of this approach is that minimization of the criterion in Equation~\ref{eq:groupedMultinom} requires new tools. The criterion is convex so for small to moderate sized problems interior point methods can be used --- however, in many applications there will be many features ($p > 10,000$) and possibly a large number of classes and/or observations. For the usual lasso, there are coordinate descent and first order algorithms that can scale to much larger problem sizes. These algorithms must be adjusted for the grouped multinomial lasso problem.

In this paper we discuss an efficient block coordinate descent algorithm to fit the group-penalized multiresponse lasso, and group-penalized multinomial lasso. The algorithm in this paper is the multiresponse analog to \citet{FHT2010}. In particular, we have incorporated this algorithm into \texttt{glmnet} \citep{glmnet} a widely used \texttt{R} package for solving penalized regression problems. 

\section{Penalized multiresponse regression}\label{sec:MR}
We first consider our Gaussian objective, Equation~\ref{eqn:mrlasso}:
\[
\min \frac{1}{2}\left\|Y - X\beta\right\|_F^2 + \lambda\sum_{k\leq p}\left\|\beta_{k \cdot}\right\|_2
\]
This can be minimized by blockwise coordinate descent (one row of $\beta$ at a time). Consider a single $\beta_{k\cdot}$ with fixed $\beta_{j\cdot}$ for all $j\neq k$. Our objective becomes
\[
\min \frac{1}{2}\left\|R_{-k} - X_{\cdot k}\beta_{k \cdot}\right\|_F^2 + \lambda\left\|\beta_{k \cdot}\right\|_2
\]
where $X_{\cdot k}$ refers to the $k$th column of $X$, and $R_{-k} = Y - \sum_{j\neq k}X_{\cdot j}\beta_{j \cdot}$ is the partial residual. If we
take a subderivative with respect $\beta_{k \cdot}$, then we get that
$\hat{\beta}_{k \cdot}$ satisfies
\begin{equation}\label{eq:subD}
\left\|X_{\cdot k}\right\|_2^2\hat{\beta}_{k \cdot}^{\top} -
X_{\cdot k}^{\top}R_{-k} + \lambda S(\hat{\beta}_{k \cdot}) = 0
\end{equation}
where $S(\hat{\beta}_{k \cdot})$ is its sub-differential
\[
S(a) 
\begin{cases}
 = \frac{a}{||a||_2}, & \textrm{if } a\neq 0 \\
\in \left\{u \textrm{ s.t. } ||u||_2\leq 1\right\}, & \textrm{if } a = 0
\end{cases}
\]
From here, simple algebra gives us that
\begin{equation}\label{eq:MRfit}
\hat{\beta}_{k \cdot} = \frac{1}{\left\|X_{\cdot k}\right\|_2^2}\left(1 - \frac{\lambda}{\left\|X_{\cdot k}^{\top}R_{-k}\right\|_2}\right)_{+}X_{\cdot k}^{\top}R_{-k}
\end{equation}
where $(a)_+ = \max(0,a)$. By cyclically applying these updates we can minimize our objective. The convergence guarantees of this style of algorithm are discussed in \citet{tseng2001}.\\
\\
{\bf Gaussian Algorithm}\\
Now we have the following very simple algorithm.  We can include all
the other bells and whistles from \texttt{glmnet} as well
\begin{enumerate}
\item Initialize $\beta = \beta_0$, $R = Y - X\beta_0$.
\item Iterate until convergence: for $k=1,\ldots,p$
\begin{enumerate}
\item Update $R_{-k}$ by
\[ 
R_{-k} = R + X_{\cdot k}\beta_{k \cdot}.
\]
\item Update $\beta_{k \cdot}$ by
\[
\beta_{k \cdot} \leftarrow
\frac{1}{\left\|X_{\cdot k}\right\|_2^2}\left(1 -
  \frac{\lambda}{\left\|X_{\cdot k}^{\top}R_{-k}\right\|_2}\right)_{+}X_{\cdot k}^{\top}R_{-k}.
\]
\item Update $R$ by
\[
R = R_{-k} - X_{\cdot k}\beta_{k \cdot}.
\]
\end{enumerate}
\end{enumerate}
Note that in step $(2b)$ if $\left\|X_{\cdot k}^{\top}R_{-k}\right\|_2 \leq \lambda$, the update is simply $\beta_{k\cdot} \leftarrow {\bf 0}$.\\

If we would like to include intercept terms in the regression, we need only mean center the columns of $X$ and $Y$ before carrying out the algorithm (this is equivalent to a partial minimization with respect to the intercept term).

\section{Extension to multinomial regression}
We now extend this idea to the multinomial setting. Suppose we have $M$
different classes.  In this setting,
$Y$ is an $n\times M$ matrix of zeros and ones --- the $i$th row has a single
one corresponding to the class of observation $i$. In a multinomial
generalized linear model one assumes that the probability of observation $i$ coming from class $m$ has the form
\[
p_{i,m} = \frac{\operatorname{exp}\left(\eta_{i,m}\right)}{\sum_{l\leq M}\operatorname{exp}\left(\eta_{i,l}\right)}
\]
with
\[
\eta_{i,m} = X_{i\cdot}\beta_{\cdot m}.
\]
This is the symmetric parametrization. From here we get the multinomial log-likelihood
\[
\ell\left({\bf p}\right) = \sum_{i=1}^n \sum_{m=1}^M y_{i,m}\log \left(p_{i,m}\right)
\]
which we can rewrite as
\begin{align*}
\ell\left({\boldsymbol \eta}\right) &= \sum_{i=1}^n \sum_{m=1}^M y_{i,m}\left[\eta_{i,m}
  - \log\left(\sum_{l\leq M}\operatorname{exp}\left(\eta_{i,l}\right)\right)\right]\\
&= \sum_{i=1}^n \left[\sum_{m=1}^M y_{i,m}\eta_{i,m}
  - \log\left(\sum_{l\leq M}\operatorname{exp}\left(\eta_{i,l}\right)\right)\right]
\end{align*}
since $\sum_{m=1}^M y_{i,m} = 1$ for each $i$. Thus our final minimization problem is
\begin{equation}\label{eq:penalizedMultinomial}
\min -\sum_{i=1}^n \left[\sum_{m=1}^M y_{i,m}X_{i\cdot}\beta_{\cdot m}
  - \log\left(\sum_{l\leq M}\operatorname{exp}\left(X_{i\cdot}\beta_{\cdot l}\right)\right)\right] + \lambda\sum_{k\leq p}\left\|\beta_{k\cdot}\right\|_2
\end{equation}
\subsection{Uniqueness}\label{sec:unique}
Before proceeding, we should note that, because we choose to use the symmetric version of the multinomial log-likelihood, without our penalty (or with $\lambda=0$) the solution to this objective is never unique. If we add or subtract a constant to an entire row of $\beta$, the unpenalized objective is unchanged. To see this, consider replacing our estimate for the $k$th row $\hat{\beta}_{k\cdot}$ by $\hat{\beta}_{k\cdot} + \delta{\bf 1}^{\top}$ (for some scalar $\delta$). Then we have
\begin{align*}
\hat{P}_{\delta}\left(y = l \middle| x\right) &= \frac{\operatorname{exp}\left(x^{\top}\hat{\beta}_{\cdot l} + x_k\delta\right)}{\sum_{m\leq M}\operatorname{exp}\left(x^{\top}\hat{\beta}_{\cdot m} + x_k\delta\right)}\\
&=\frac{\operatorname{exp}\left(x_k\delta\right)\operatorname{exp}\left(x^{\top}\hat{\beta}_{\cdot l}\right)}{\operatorname{exp}\left( x_k\delta\right)\sum_{m\leq M}\operatorname{exp}\left(x^{\top}\hat{\beta}_{\cdot m}\right)}\\
&= \hat{P_{0}}\left(y = l \middle| x\right)
\end{align*}
Now, as the unpenalized loss is entirely determined by the estimated probabilities and the outcomes, this result tells us that the row means do not affect the unpenalized loss. This is not the case for our penalized problem --- here, the row means are all $0$.

\begin{lemma}
For a given $X$ matrix, $y$ vector, and $\lambda > 0$, let $\beta^{*}$ denote the solution to the minimization of (\ref{eq:penalizedMultinomial}). Let $\mu^*$ be the vector of row-means of $\beta^*$.\\
\\
We have that $\mu^* = {\bf 0}.$
\end{lemma}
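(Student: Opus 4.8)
The plan is to argue by contradiction, exploiting the translation invariance of the unpenalized loss established immediately above together with a Pythagorean decomposition of each row of $\beta$. Suppose $\mu^* \neq {\bf 0}$, so that $\mu^*_k \neq 0$ for at least one index $k$. I would then introduce the competitor $\tilde{\beta}$ obtained by centering every row, $\tilde{\beta}_{k\cdot} = \beta^*_{k\cdot} - \mu^*_k {\bf 1}^{\top}$ for all $k$, and show that $\tilde{\beta}$ achieves a strictly smaller value of the objective in (\ref{eq:penalizedMultinomial}), which is the desired contradiction.

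First I would observe that, since adding a constant to a row of $\beta$ leaves all fitted probabilities $p_{i,m}$ unchanged (this is exactly the computation carried out just before the lemma), we have $\ell(y, X\tilde{\beta}) = \ell(y, X\beta^*)$; the two coefficient matrices have identical unpenalized loss. Next I would compare the penalty terms row by row. For each $k$, write $\beta^*_{k\cdot} = \mu^*_k {\bf 1}^{\top} + \tilde{\beta}_{k\cdot}$; the vector $\mu^*_k {\bf 1}$ is parallel to ${\bf 1}$ and $\tilde{\beta}_{k\cdot}$ is orthogonal to ${\bf 1}$ in $\mathbb{R}^M$ by construction, so by the Pythagorean identity $\|\beta^*_{k\cdot}\|_2^2 = M(\mu^*_k)^2 + \|\tilde{\beta}_{k\cdot}\|_2^2 \geq \|\tilde{\beta}_{k\cdot}\|_2^2$, with equality if and only if $\mu^*_k = 0$. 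Taking square roots, multiplying by $\lambda > 0$, and summing over $k$, the total penalty at $\tilde{\beta}$ is no larger than at $\beta^*$, and it is strictly smaller because at least one row mean is nonzero. Combining the equal loss with the strictly smaller penalty gives that the objective is strictly smaller at $\tilde{\beta}$, contradicting the optimality of $\beta^*$; hence $\mu^* = {\bf 0}$.

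There is no serious obstacle here: the only point that needs a little care is ensuring that the penalty decrease is \emph{strict}, which is where both $\lambda > 0$ and the orthogonality of the mean component to the centered part are used, and which explains why the symmetric parametrization forces the penalized optimum into the mean-zero slice even though the loss alone is indifferent to it. As a byproduct, the identical argument applied to an arbitrary minimizer shows that every minimizer has zero row means, so imposing this constraint discards no genuine optimum.
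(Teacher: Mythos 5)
Your proof is correct and takes essentially the same route as the paper: both rely on the translation invariance of the unpenalized multinomial loss plus the Pythagorean decomposition of each row into its mean component along ${\bf 1}$ and the centered remainder, the only difference being that you phrase it as a contradiction while the paper sandwiches the penalty difference between $0$ and $0$. In fact your statement of the identity $\|\beta^*_{k\cdot}\|_2^2 = M(\mu^*_k)^2 + \|\tilde{\beta}_{k\cdot}\|_2^2$ is the correct form of the relation that appears with a sign/labeling typo in the paper's displayed chain.
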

\begin{proof}
Let $\beta^{**} = \beta^{*} - {\bf 1} {\mu^{*}}^{\top}$ be the ``row-mean centered'' version of $\beta^{*}$. Plugging these in to the penalized log-likelihood in Equation~\ref{eq:penalizedMultinomial}, we see that the difference between two penalized log-likelihoods is
\begin{align*}
0 &\leq L\left(\beta^{*}\right) - L\left(\beta^{**}\right)\\
&= \lambda\sum_{k\leq p}\left[\left\|\beta^*_{k\cdot}\right\|_2 - \left\|\beta^{**}_{k\cdot}\right\|_2\right]\\
&= \lambda\sum_{k\leq p}\left[\left\|\beta^*_{k\cdot}\right\|_2 - \sqrt{\left\|\beta^*_{k\cdot}\right\|_2^2 + \left\|\mu^*_{k\cdot}\right\|_2^2} \right]\\
&\leq 0
\end{align*}
Thus $\left\|\mu^*_{k\cdot}\right\|_2 = 0$, so $\mu^{*} = {\bf 0}$.
\end{proof}


\subsection{Multinomial optimization}
This optimization problem is nastier than its Gaussian
counterpart, as the coordinate updates no longer have a closed form solution. However, as in \texttt{glmnet} we can use an approximate Newton scheme and optimize our multinomial objective by repeatedly approximating with a
quadratic and minimizing the corresponding gaussian problem.\\

We begin with $\tilde{\beta}$, some inital guess of $\beta$. From this estimate, we can find an estimate of our probabilities:
\[
p_{i,m} = \frac{\operatorname{exp}\left(X_{i\cdot}\tilde\beta_{\cdot m}\right)}{\sum_{l\leq M}\operatorname{exp}\left(X_{i\cdot}\tilde\beta_{\cdot l}\right)}
\]

Now, as in standard Newton-Raphson, we calculate the first and second derivatives of our log-likelihood
(in $\eta$). We see that
\[
\frac{\partial \ell}{\partial{\eta_{i,m}}} = y_{i,m} - p_{i,m}
\]
For the second derivatives, as usual we get ``independence'' between observations,
i.e., if $j\neq i$, for any $m$, $l$
\[
\frac{\partial \ell}{\partial\eta_{i,m}\partial\eta_{j,l}} = 0.
\]
We also have our usual Hessian within observation
\begin{equation}\label{eq:hess1}
\frac{\partial \ell}{\partial\eta_{i,m}\partial\eta_{i,l}} = p_{i,m}p_{i,l}
\end{equation}
for $m\neq l$, and
\begin{equation}\label{eq:hess2}
\frac{\partial \ell}{\partial\eta_{i,m}^2} = -p_{i,m}(1-p_{i,m}).
\end{equation}
Let $H_i$ denote the within observation Hessian. By combining (\ref{eq:hess1}) and (\ref{eq:hess2}) we see that
\[
-H_i = \operatorname{diag}\left(p_{i\cdot}\right) - p_{i\cdot}^{\top}p_{i\cdot}
\]
and we can write out a second order Taylor series approximation to our log-likelihood (centered around some value $\tilde{\beta}$) by
\begin{align*}
-\ell(\beta) &\approx -\ell(\tilde{\beta}) - \operatorname{trace}\left[(\beta - \tilde{\beta})^{\top}
X^{\top}\left(Y - P\right)\right]\\
&- \frac{1}{2}\sum_{i=1}^n X_{i\cdot}\left(\beta -
  \tilde{\beta}\right) H_i \left(\beta -
  \tilde{\beta}\right)^{\top}X_{i\cdot}^{\top}
\end{align*}
Because we have independence of observations, the second order term has decoupled into the sum of simple quadratic forms. Unfortunately, because each of these $H_i$ are different (and not even
full rank), using this quadratic approximation instead of the original
log-likelihood would still be difficult. We would like to find a
simple matrix which dominates all of the $-H_i$. To this end, we show that $-H_i \preceq tI$ where 
\[
t = 2\max_{i,j}\left\{p_{i,j}\left(1-p_{i,j}\right)\right\}\leq 1/2
\]
\begin{lemma}
Let ${\bf p}$ be an $M$-vector of probabilities and define
\[
-H = \operatorname{diag}\left({\bf p}\right) - {\bf p}{\bf p}^{\top}.
\]
We have that $-H\preceq \operatorname{max}_{l}\left[2{\bf p}_l(1-{\bf p}_l)\right]I$
\end{lemma}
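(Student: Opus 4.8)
The plan is to convert the stated semidefinite inequality into a bound on the top eigenvalue of $-H$ and then obtain that bound from Gershgorin's circle theorem. Since $-H = \operatorname{diag}({\bf p}) - {\bf p}{\bf p}^{\top}$ is real symmetric, the claim $-H \preceq c\,I$ with $c = \max_l 2{\bf p}_l(1-{\bf p}_l)$ is equivalent to $\lambda_{\max}(-H) \le c$, so it suffices to localize the spectrum of $-H$.

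First I would write down the entries of $-H$: its $l$-th diagonal entry is ${\bf p}_l - {\bf p}_l^2 = {\bf p}_l(1-{\bf p}_l)$, and its $(l,m)$ entry for $m\neq l$ is $-{\bf p}_l{\bf p}_m$. Hence the $l$-th Gershgorin disc is centered at ${\bf p}_l(1-{\bf p}_l)$ with radius $\sum_{m\neq l}|{\bf p}_l{\bf p}_m| = {\bf p}_l\sum_{m\neq l}{\bf p}_m = {\bf p}_l(1-{\bf p}_l)$, so every eigenvalue of the symmetric matrix $-H$ lies in one of the real intervals $[0,\,2{\bf p}_l(1-{\bf p}_l)]$. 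Taking the largest right endpoint yields $\lambda_{\max}(-H) \le \max_l 2{\bf p}_l(1-{\bf p}_l)$, which is the assertion.

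If one prefers a self-contained argument, the same bound drops out of a short computation with the quadratic form (which is essentially Gershgorin's proof specialized to this matrix). For any vector $v$ we have $v^{\top}(-H)v = \sum_l {\bf p}_l v_l^2 - \big(\sum_l {\bf p}_l v_l\big)^2$, the variance of a value drawn from $v_1,\dots,v_M$ with probabilities ${\bf p}_1,\dots,{\bf p}_M$. Using the identity $\sum_l {\bf p}_l v_l^2 - \big(\sum_l {\bf p}_l v_l\big)^2 = \sum_{l<m}{\bf p}_l{\bf p}_m(v_l-v_m)^2$ together with $(v_l-v_m)^2 \le 2v_l^2 + 2v_m^2$, and then gathering the coefficient of each $v_l^2$ (which works out to $2{\bf p}_l\sum_{m\neq l}{\bf p}_m = 2{\bf p}_l(1-{\bf p}_l)$), one gets $v^{\top}(-H)v \le \sum_l 2{\bf p}_l(1-{\bf p}_l)v_l^2 \le \big(\max_l 2{\bf p}_l(1-{\bf p}_l)\big)\|v\|_2^2$.

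I do not expect a real obstacle here. The only point worth noting is that the lazy bound $-H \preceq \operatorname{diag}({\bf p}) \preceq (\max_l {\bf p}_l)\,I$ --- obtained by discarding ${\bf p}{\bf p}^{\top} \succeq 0$ --- is much too weak whenever some ${\bf p}_l$ is near $1$, so the proof genuinely needs the cancellation coming from the rank-one term ${\bf p}{\bf p}^{\top}$ (equivalently, the off-diagonal row mass), and both routes above are built around exactly that cancellation.
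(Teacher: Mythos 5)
Your primary (Gershgorin) argument is correct and is essentially the paper's proof in different clothing: the paper shows $D = \operatorname{diag}\left[2{\bf p}(1-{\bf p})\right] + H$ is diagonally dominant, and the row-sum computation it uses, $\sum_{m\neq l}{\bf p}_l{\bf p}_m = {\bf p}_l(1-{\bf p}_l)$, is exactly the one that gives your disc radii. Your self-contained quadratic-form/variance alternative is also valid, but the main route matches the paper, so there is nothing to add.
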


\begin{proof}
Define $D$ by
\[
D = \operatorname{diag}\left[2{\bf p}(1-{\bf p})\right] + H
\]
We would like to show that $D$ diagonally dominant and thus positive semi-definite. Toward this end, choose some $m \leq M$.
\begin{align*}
\sum_{l\neq m} |D_{m,l}| &= \sum_{l\neq m} {\bf p}_l{\bf p}_m\\
&= {\bf p}_m \sum_{l\neq m} {\bf p}_l\\
&= {\bf p}_m \left(1-{\bf p}_m\right)\\
&= D_{m,m}
\end{align*}
Thus $D$ is positive semi-definite, and so $-H\preceq \operatorname{diag}\left[2{\bf p}(1-{\bf p})\right]$. Furthermore, $\operatorname{diag}\left[2{\bf p}(1-{\bf p})\right] \preceq  \operatorname{max}_{l}\left[2{\bf p}_l(1-{\bf p}_l)\right]I$ so we also have
\[
-H\preceq \operatorname{max}_{l}\left[2{\bf p}_l(1-{\bf p}_l)\right]I
\]

\end{proof}
Now if we consider $t_i = 2\max_{j}\left\{p_{i,j}\left(1-p_{i,j}\right)\right\}$, then the preceeding lemma gives us that $-H_i \preceq t_i I$, and since for all $i$ $t_i \leq t$, we have our majorization $-H_i \preceq tI$.\\

If we replace our original Hessian with this majorizing approximation, with some algebraic manipulation we can reduce our problem to the gaussian framework. Furthermore, because our new approximation dominates the Hessian, we still enjoy nice convergence properties. Toward this end, we write a new majorizing quadratic approximation (by replacing each $-H_i$ by $tI$)
\begin{align*}
-\ell(\beta) &\leq -\ell(\tilde{\beta}) -  \operatorname{trace}\left[(\beta - \tilde{\beta})^{\top}
X^{\top}\left(Y - P\right)\right]\\
&+\frac{t}{2}\operatorname{trace}\left[\sum_{i=1}^n \left(\beta -
  \tilde{\beta}\right)^{\top}X_{i\cdot}^{\top} X_{i\cdot}\left(\beta -
  \tilde{\beta}\right)\right]
\end{align*}
which, with simple algebra, becomes
\[
-\ell(\beta) \leq -\ell(\tilde{\beta}) -  \operatorname{trace}\left[(\beta - \tilde{\beta})^{\top}
X^{\top}\left(Y - P\right)\right] + \frac{t}{2}\left\|X\left(\beta -
  \tilde{\beta}\right)\right\|_F^2
\]
Completing the square, we see that minimizing this with the addition of our penalty is equivalent to
\[
\min_{\beta} \frac{1}{2}\left\|X\left(\beta -
  \tilde{\beta}\right) - (Y - P)/t\right\|_F^2 + \lambda/t\sum\left\|\beta_{i\cdot}\right\|_2
\]
Notice that we have reduced the multinomial problem to our gaussian framework. Our plan of attack is to repeatedly approximate our loss by this penalized quadratic (centered at our current estimate of $\beta$), and minimize this loss to update our estimate of $\beta$.\\
\\
{\bf Multinomial Algorithm}\\
Combining the outer loop steps with our gaussian algorithm we have the
following simple algorithm.  We can still include all
the other bells and whistles from \texttt{glmnet} as well.
\begin{enumerate}
\item Initialize $\beta = \beta_0$.
\item Iterate until convergence:
\begin{enumerate}
\item Update $\eta$ by $\eta = X\beta$.
\item Update $P$ by 
\[
p_{i,m} = \frac{\operatorname{exp}\left(\eta_{i,m}\right)}{\sum_{l=1}^M \operatorname{exp}\left(\eta_{i,l}\right)}.
\]
\item Set $t = 2\max_{i,j}\left\{p_{i,j}\left(1-p_{i,j}\right)\right\}$ and set $R = \left(Y - P\right)/t$.
\item Iterate until convergence: for $k=1,\ldots,p$
\begin{enumerate}
\item Update $R_{-k}$ by
\[ 
R_{-k} = R + X_{\cdot k}\beta_{k \cdot}.
\]
\item Update $\beta_{k \cdot}$ by
\[
\beta_{k \cdot} \leftarrow
\frac{1}{\left\|X_{\cdot k}\right\|_2^2}\left(1 -
  \frac{\lambda/t}{\left\|X_{\cdot k}^{\top}R_{-k}\right\|_2}\right)_{+}X_{\cdot k}^{\top}R_{-k}.
\]
\item Update $R$ by
\[
R = R_{-k} - X_{\cdot k}\beta_{k\cdot}.
\]
\end{enumerate}
\end{enumerate}
\end{enumerate}

\subsection{Path solution}
Generally we will be interested in models for more than one value of $\lambda$. As per usual in \texttt{glmnet} we compute solutions for a path of $\lambda$ values. We begin with the smallest $\lambda$ such that $\beta = 0$, and end with $\lambda$ near $0$. By initializing our algorithm for a new $\lambda$ value at the solution for the previous value, we increase the stability of our algorithm (especially in the presence of correlated features) and efficiently solve along the path. It is straightforward to see that our first $\lambda$ value is
\[
\lambda_{\textrm{max}} = \max_k\left\|X_{\cdot k}^{\top}\left(Y - P_0\right)\right\|_2
\]
where $P_0$ is just a matrix of the sample proportions in each class. We generally do not solve all the way to the unregularized end of the path. When $\lambda$ is near $0$ the solution is very poorly statistically behaved and the algorithm  takes a long time to converge --- any reasonable model selection criterion will choose a more restricted model. To that end, we choose $\lambda_{\textrm{min}} = \epsilon\lambda_{\textrm{max}}$ (with $\epsilon = 0.05$ in our implementation) and compute solutions over a grid of $m$ values with $\lambda_j = \lambda_{\textrm{max}}(\lambda_{\textrm{min}}/\lambda_{\textrm{max}})^{j/m}$ for $j=0,\,\ldots,\,m$.

\subsection{Strong rules}
It has been demonstrated in \citet{tibshirani2012strong} that using a prescreen can significantly cut down on the computation required for fitting lasso-like problems. Using a similar argument as in \citet{tibshirani2012strong}, at a given $\lambda = \lambda_j$ we can screen out variables for which
\[
\left\|X_{\cdot k}^{\top}R\left(\lambda_{j-1}\right)\right\|_2 \leq \alpha\left(2\lambda_j - \lambda_{j-1}\right)
\]
where $R\left(\lambda_{j-1}\right) = Y - X\hat{\beta}\left(\lambda_{j-1}\right)$ for the gaussian case and $R\left(\lambda_{j-1}\right) = Y - \hat{P}\left(\lambda_{j-1}\right)$ for the multinomial case. Now, these rules are unfortunately not ``safe'' (they could possibly throw out features which should be in the fit, though in practice they essentially never do). Thus, at the end of our algorithm we must check the Karush-Kuhn Tucker optimality conditions for all the variables to certify that we have reached the optimum (and potentially add back in variables in violation). In practice, there are very rarely violations.

\section{Elastic net}
We have seen that in some cases performance of the lasso can be improved by the addition of an $\ell_2$ penalty. This is known as the elastic-net \citep{ZH2005}.   Suppose now we wanted to solve the elastic-net problem
\[
\min \frac{1}{2}\left\|Y - X\beta\right\|_F^2 +
\lambda\alpha\sum\left\|\beta_{k\cdot}\right\|_2 + \frac{\lambda(1-\alpha)}{2}\left\|\beta\right\|_F^2
\]
We can again solve one row of $\beta$ at a time. The row-wise solution
satisfies
\[
\left\|X_{\cdot k}\right\|_2^2\hat{\beta}_{k\cdot}^{\top} -
X_{\cdot k}^{\top}R_{-k} + \lambda\alpha S(\hat{\beta}_{k\cdot}) + \lambda(1-\alpha)\beta_{k\cdot}^{\top} = 0
\]
Again, simple algebra gives us that
\begin{equation}\label{eq:enet}
\hat{\beta}_{k\cdot} = \frac{1}{\left\|X_{\cdot k}\right\|_2^2 + \lambda(1-\alpha)}\left(1 - \frac{\lambda\alpha}{\left\|X_{\cdot k}^{\top}R_{-k}\right\|_2}\right)_{+}X_{\cdot k}^{\top}R_{-k}
\end{equation}
Thus, the algorithm to fit the elastic net for multiresponse regression is exactly as before with step (b)
replaced by (\ref{eq:enet}). We can similarly apply this to multinomial regression, and replace step (ii) of multinomial algorithm with our new update.

\section{Timings}
We timed our algorithm on simulated data, and compared it to the \texttt{msgl} package \citep{msgl}, which implements an alternative algorithm to solve this problem, described in \citet{vincent2012}. We also compare to a similar algorithm and implementation in our package (\texttt{glmnet}) for the usual  multinomial lasso regression without grouping. Both \texttt{glmnet} implementations are written in \texttt{R} with the heavy lifting done in \texttt{Fortran}. The \texttt{msgl} code interfaces from \texttt{R}, but all of optimization code is written in \texttt{c++}. All simulations were run on an Intel Xeon $X5680$, $3.33$ ghz processor. Simulations were run with varying numbers of observations $n$, features $p$, and classes $M$ for a path of $100$ $\lambda$-values with ($\lambda_{\textrm{min}} = 0.05\lambda_{\textrm{max}}$) averaged over $10$ trials. Features were simulated as standard Gaussian random variables with equicorrelation $\rho$. In all simulations, we set the true $\beta$ to have iid $N(0, 4/M^2)$ entries in its first $3$ rows, and $0$ for all other entries.\\

\begin{table}
\begin{center}
\begin{tabular}{l|rr}
  \hline
  & $\rho$ = 0 & $\rho = 0.2$ \\
  \hline\\
  &\multicolumn{2}{c}{$n=50,\, p=100,\, M=5$}\\
\\
  \hline
  \texttt{glmnet} grouped  & 0.32 & 0.36 \\
  \texttt{msgl} grouped & 2.86 & 2.32 \\
  \texttt{glmnet} ungrouped & 0.13 & 0.12 \\
  \hline\\
  &\multicolumn{2}{c}{$n=100,\, p=1000,\, M=5$}\\
\\
  \hline
  \texttt{glmnet} grouped  & 1.03 & 1.21\\
  \texttt{msgl} grouped & 12.14 & 8.95 \\
  \texttt{glmnet} ungrouped & 0.56 & 0.45\\
  \hline\\
  &\multicolumn{2}{c}{$n=100,\, p=5000,\, M=10$}\\
\\
  \hline
  \texttt{glmnet} grouped  & 3.71 & 4.58\\
  \texttt{msgl} grouped & 66.57 & 44.64\\
  \texttt{glmnet} ungrouped & 2.90 & 2.49\\
  \hline\\
  &\multicolumn{2}{c}{$n=200,\, p=10000,\, M=10$}\\
\\
  \hline
  \texttt{glmnet} grouped  & 12.14 & 16.59 \\
  \texttt{msgl} grouped & 284.43 & 186.05 \\
  \texttt{glmnet} ungrouped & 9.32 & 7.70\\
   \hline
\end{tabular}
\vspace{4mm}
\caption{Timings in seconds for \texttt{msgl} and \texttt{glmnet} implementations of grouped and \texttt{glmnet} implementation of ungrouped multinomial lasso, for a path of $100$ $\lambda$-values, averaged over $10$ trials, for a variety of $n,\,p,\,M$, and $\rho$.}
\label{tab:timings}
\end{center}
\end{table}
From Table~\ref{tab:timings} we can see that for the grouped problem, \texttt{glmnet} is an order of magnitude faster than \texttt{msgl}. Also, though compared to the ungrouped multinomial lasso we do take a little bit of a hit, our grouped algorithm can still solve very large problems quickly, solving gene-expression sized problems in under a minute.

\section{Discussion}
We have given an efficient group descent algorithm for fitting the group-penalized multiresponse and multinomial lasso models and empirically shown the efficiency of our algorithm. It has also been included in the current version (1.8-2) of the \texttt{R} package \texttt{glmnet}.

\bibliographystyle{abbrvnat}
\bibliography{man}

\end{document}